\newtheorem{proposition}{Proposition}
\begin{document}
\title{Secure Communication in Dynamic RDARS-Driven Systems}

\author{Ziqian Pei, Jintao Wang, Pingping Zhang, Zheng Shi, Guanghua Yang, Shaodan Ma
\thanks{Ziqian Pei, Jintao Wang, Pingping Zhang and Shaodan Ma are with the State Key Laboratory of Internet of Things for Smart City and the Department of Electrical and Computer Engineering, University of Macau, Macau 999078, China (e-mail: mc35150@um.edu.mo; wang.jintao@connect.um.edu.mo; yc17433@um.edu.mo; shaodanma@um.edu.mo).}
\thanks{Zheng Shi and Guanghua Yang are with the School of Intelligent Systems Science and Engineering, Jinan University, Zhuhai 519070, China (e-mails: zhengshi@jnu.edu.cn; ghyang@jnu.edu.cn).}
}



\maketitle
\thispagestyle{empty}
\begin{abstract}
In this letter, we investigate a dynamic reconfigurable distributed antenna and reflection surface (RDARS)-driven secure communication system, where the working mode of the RDARS can be flexibly configured. We aim to maximize the secrecy rate by jointly designing the active beamforming vectors, reflection coefficients, and the channel-aware mode selection matrix. To address the non-convex binary and cardinality constraints introduced by dynamic mode selection, we propose an efficient alternating optimization (AO) framework that employs penalty-based fractional programming (FP) and successive convex approximation (SCA) transformations. Simulation results demonstrate the potential of RDARS in enhancing the secrecy rate and show its superiority compared to existing reflection surface-based schemes.

\end{abstract}

\begin{IEEEkeywords} 
Physical-layer security, reconfigurable distributed antenna and reflecting surface (RDARS), reconfigurable intelligent surface (RIS), secure communication 
\end{IEEEkeywords}

\section{Introduction}
As a new wireless communication technology emerging in recent years, reconfigurable intelligent surfaces (RISs) play an important role in the future development of 6G \cite{10596064}. A RIS leverages a large number of programmable electromagnetic units to flexibly regulate incident electromagnetic waves and dynamically reshape the propagation environment, thereby improving coverage and transmission efficiency for wireless communications \cite{9475160}.

Due to its high potential for energy and spectrum efficiency, numerous research efforts have been conducted to explore the benefits of RIS-aided communication systems. However, the fully passive nature of RIS limits its reflection gains due to the ``multiplicative fading'' effect \cite{zhang2023active}. To overcome this fundamental physical limitation while retaining the advantages of RIS, \cite{ma2023reconfigurable} proposed a novel flexible architecture called reconfigurable distributed antenna and reflection surface (RDARS), which encompasses distributed antenna systems (DAS) and RIS as two special cases.
RDARS combines the flexibility of distributed antennas with the reconfigurability of passive reflecting surfaces, allowing each element to be dynamically programmed in two modes: reflection mode and connection mode. Elements in reflection mode function similarly to conventional passive RIS, while those in connection mode act as distributed antennas to directly transmit or receive signals.

Researchers have applied RDARS to improve communication capacity, transmission reliability, and indoor localization, as seen in \cite{ma2023reconfigurable, ma2024, wang2024, wang2024spawc, 10233300}. For example, the authors in \cite{ma2023reconfigurable} derived closed-form expressions for the ergodic achievable rate with both optimal and arbitrary RDARS configurations using maximum ratio combining (MRC), and provided experimental results for a prototype of the RDARS-aided system with 256 elements. The extension of RDARS to a massive multiple-input multiple-output (MIMO) uplink communication scenario with a two-time scale (TTS) transceiver design was explored in \cite{ma2024}. Additionally, \cite{wang2024} investigated the system transmission reliability for RDARS-aided uplink multi-user communication systems and proposed an inexact block coordinate descent (BCD)-based penalty dual decomposition (PDD) algorithm to minimize the sum minimum square error (MSE). However, the potential of RDARS in enhancing physical layer security remains unexplored.

In this letter, we leverage the flexible architecture of RDARS to enhance secrecy rate performance by jointly designing the active beamforming vectors, reflection coefficients, and the channel-aware mode selection matrix. Unlike existing reflection surface-aided secure communications \cite{cui2019secure, dong2022active, guan2020intelligent}, we consider the additional degree of freedom (DoF) introduced by channel-aware mode selection for RDARS. This leads to a more complex optimization problem involving binary and cardinality constraints.
To address these challenges, we employ penalty-based fractional programming (FP) and successive convex approximation (SCA) methods to handle the non-convex constraints, and propose an efficient alternating optimization (AO) framework to iteratively optimize each variable. Simulation results demonstrate the potential of RDARS in enhancing the secrecy rate, showing its superiority over existing reflection surface-based schemes.

\section{System Model and Problem Formulation}
    
\subsection{System Model}
\begin{figure}[h]
    \centering  
     \includegraphics[width=0.4\textwidth]{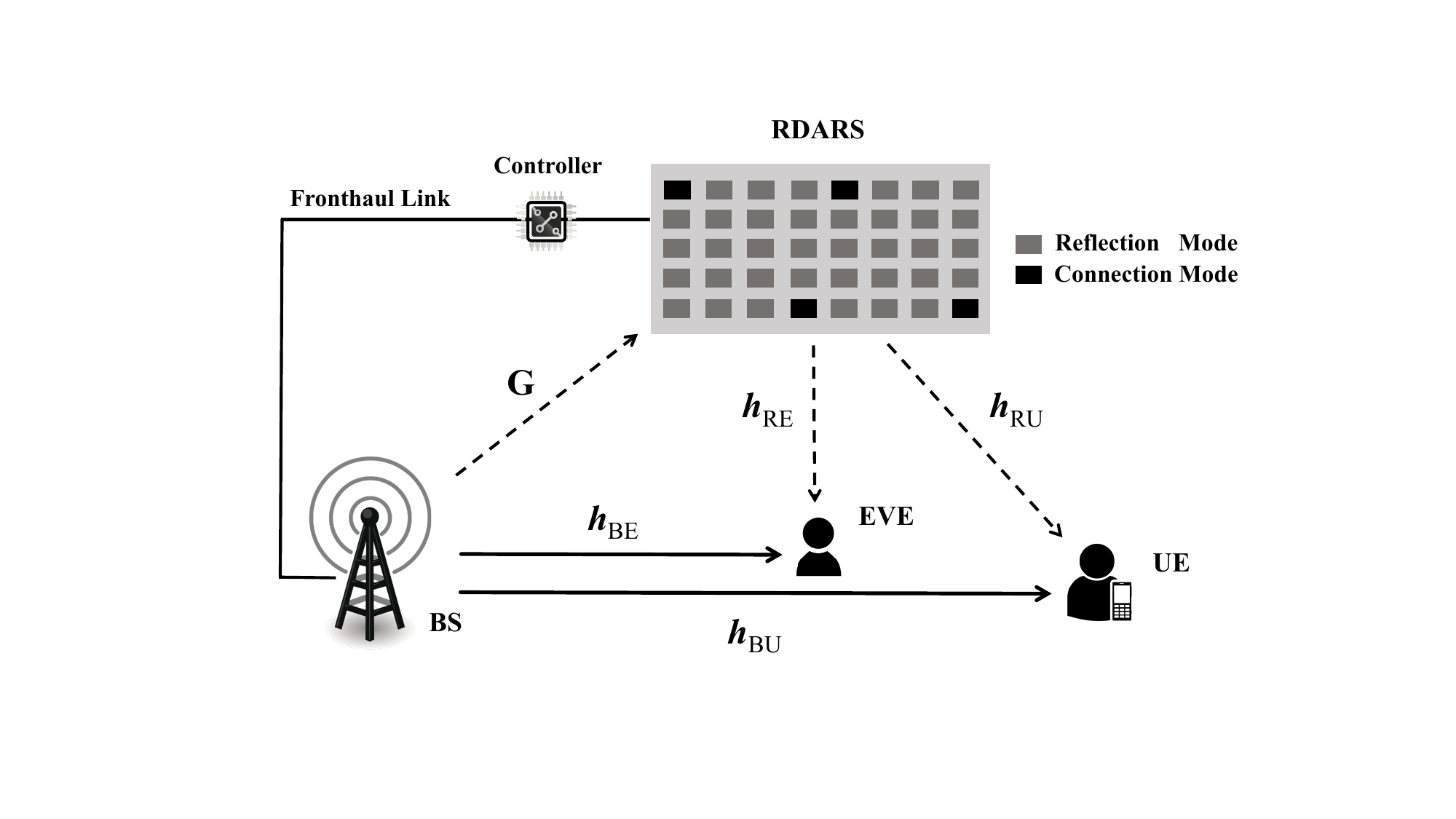} 
     \caption{A RDARS-aided secure transmission system.}
     \vspace{-12pt}
     \label{system_model} 
\end{figure}
We consider a RDARS-driven downlink secure communication system comprising one single-antenna legitimate user equipment (UE), one single-antenna eavesdropper (EVE), one  $N_t$-antenna base station (BS) and one RDARS equipped with $N$ elements, as illustrated in Fig.~\ref{system_model}.  
Each element in the RDARS can be dynamically programmed to operate in either reflection or connection mode via the controller \cite{ma2023reconfigurable}. Let $a$ denote the number of elements operating in connection mode, while the remaining $N-a$ elements operate in reflection mode. 
We define $\mathcal{S}$ and $\mathcal{N}$ as the index sets of the elements in connection mode and the total elements, respectively. Additionally, we introduce the binary diagonal mode selection matrix  ${\bf{S}}\in \mathbb{R}^{N \times N}$, where the diagonal values are set to 1 for elements in connection mode and 0 for those in reflection mode, i.e.,
\begin{align}
     [{\bf{S}}]_{n,n}=
     \left\{
        \begin{aligned}
        1, ~& \forall n \in \mathcal{S}, \\
        0, ~ & \forall n \notin \mathcal{S}.
        \end{aligned}
    \right.
\end{align}
The reflection coefficient matrix for RDARS is denoted as ${\bm{\Theta}}={\rm diag}({\bm{\theta}})$ where ${\bm{\theta}}=[{\theta}_{1}^H,{\theta}_2^H,\cdots,{\theta}_N^H]^H$.
The channel coefficients are defined as follows: ${\bf{G}}\in\mathbb{C}^{N_t \times N}$ represents the channel from the BS to the RDARS, ${\bm{h}}_{\rm BU}\in\mathbb{C}^{N_t \times 1}$ is the channel from the BS to the legitimate user, ${\bm{h}}_{\rm BE}\in\mathbb{C}^{N_t \times 1}$ is the channel from the BS to the eavesdropper, ${\bm{h}}_{\rm RU}\in\mathbb{C}^{N \times 1}$ is the channel from the RDARS to the legitimate user, and ${\bm{h}}_{\rm RE}\in\mathbb{C}^{N \times 1}$ is the channel from the RDARS to the eavesdropper.
To explore the performance limits of the considered RDARS-enhanced secure communication system, we assume that all channel state information (CSI) is perfectly known \footnote{The channel information can be obtained using methods proposed in \cite{jin2021channel}. The investigation of the impact of imperfect channel information on RDARS-aided secure transmission is left for future work.}.


Denote $s$ as the confidential message satisfying ${\mathbb{E}}[{|s|^2}]=1$. 
The received signals at the legitimate user and the eavesdropper can be respectively expressed as
\begin{subequations}\label{signal:label}
\begin{align}
    {y}_{u} & \!=\! ({\bm{h}}_{\rm BU}^H \!+\!  {\bm{h}}_{\rm RU}^H  ({\bf{I}}\!-\!{\bf{S}}){\bm{\Theta}} {\bf{G}}^H) {\bm{\omega}}_b s \!+ {\bm{h}}_{\rm RU}^H {\bf{S}}_a^H  {\bm{\omega}}_r s \!+\! n_u, \\
    {y}_{e} & \!=\! ({\bm{h}}_{\rm BE}^H \!+\!  {\bm{h}}_{\rm RE}^H  ({\bf{I}}\!-\!{\bf{S}}){\bm{\Theta}} {\bf{G}}^H){\bm{\omega}}_bs \!+{\bm{h}}_{\rm RE}^H {\bf{S}}_a^H{\bm{\omega}}_r s \!+\! n_e,
\end{align}
\end{subequations}
where ${\bm{\omega}}_b\in\mathbb{C}^{N_t \times 1}$ and  ${\bm{\omega}}_r\in\mathbb{C}^{a \times 1}$ denote the active beamforming vectors for the BS and the connected elements at the RDARS, respectively. 
$n_u$ and $n_e$ denote the received additive white Gaussian noise (AWGN) at the legitimate user and the eavesdropper with zero mean and variance of $\sigma_u^2$ and $\sigma_e^2$, respectively.
The fat matrix ${\bf{S}}_a \in \mathbb{C}^{a \times N}$ in \eqref{signal:label} is a submatrix of ${\bf{S}}$ consisting of non-zero rows of the mode selection matrix ${\bf{S}}$, satisfying ${\bf{S}}_a^H {\bf{S}}_a={\bf{S}}$ and ${\bf{S}}_a {\bf{S}}_a^H = {\bf{I}}_a$. Therefore, ${\bf{S}}_a{\bm{h}}_{\rm RU}$ and ${\bf{S}}_a{\bm{h}}_{\rm RE}$ represent the sub-channels associated with the connected elements at the RDARS.

Denote $ {\bm{h}}_{\rm ub} \triangleq {\bm{h}}_{\rm BU} +   {\bf{G}} {\bm{\Theta}}^H ({\bf{I}}-{\bf{S}}) {\bm{h}}_{\rm RU} $ and ${\bm{h}}_{\rm ur}\triangleq {\bf{S}}_a{\bm{h}}_{\rm RU}$ as the effective channels associated with the elements in the reflection and connection modes for the legitimate user, respectively. Similarly, $ {\bm{h}}_{\rm eb} \triangleq {\bm{h}}_{\rm BE} +   {\bf{G}} {\bm{\Theta}}^H ({\bf{I}}-{\bf{S}}) {\bm{h}}_{\rm RE} $ and ${\bm{h}}_{\rm er}\triangleq {\bf{S}}_a{\bm{h}}_{\rm RE}$. Then, the secrecy rate from the BS to the legitimate user in bits/second/Hertz (bps/Hz) can be expressed as \cite{k.a2010secure}
\begin{align}
    R_{s} = \left[ R_{u}-R_{e} \right]^{+},
\end{align}
where $\left[ z\right]^{+}\triangleq {\max (z,0)}$. 
$R_{u}$ and $R_{e}$ denote the achievable rates of the legitimate link and the eavesdropper link, respectively, defined as 
\begin{subequations} 
\begin{align}
    R_{u} & = \log_2 \left(  1 + \frac{|{\bm{h}}_{\rm ub}^H{\bm{\omega}}_b \!+\! {\bm{h}}_{\rm ur}^H{\bm{\omega}}_r |^2}{\sigma_u^2} \right), \label{R_u} \\ 
    R_{e} & = \log_2 \left(  1 + \frac{|{\bm{h}}_{\rm eb}^H{\bm{\omega}}_b \!+\! {\bm{h}}_{\rm er}^H{\bm{\omega}}_r |^2}{\sigma_e^2} \right). \label{R_e} 
\end{align}
\end{subequations}




\subsection{Problem Formulation}

In this letter, we aim to maximize the secrecy rate by jointly designing the active beamforming vectors $\{{\bm{\omega}}_b,{\bm{\omega}}_r\}$, the reflection coefficient matrix ${\bm{\Theta}}$ and the channel-aware mode selection matrix ${\bf{S}}$.
Accordingly, the secrecy rate maximization problem is formulated as \footnote{The operator $\left[ \cdot \right]^{+}$ in the objective function \eqref{P1_obj} is neglected without loss of optimality since the optimal value must be non-negative. This can be proven by contradiction: If $R_u-R_e<0$, we can increase its value to zero by setting ${\bm{\omega}}_b={\bm{\omega}}_r={\bf{0}}$ without violating the constraints.} 
 \begin{subequations}\label{Problem P1}
  \begin{align}
  (\text{P0}): \ \max_{{\bf{S}},{\bf{S}}_a,{\bm{\Theta}},{\bm{\omega}}_b,{\bm{\omega}}_r } \quad  \!\!& R_u-R_e   \label{P1_obj}  \\
 \!\!\mbox{s.t.} \qquad 
        & ||{\bm{\omega}}_b||^2+||{\bm{\omega}}_r||^2 \leq P ,\label{power_subcarrier} \\
       & | {\bf{\Theta}} |_{n,n}=1, ~\forall n \in \mathcal{N},
       \label{P1_RIS}  \\
       & [{\bf{S}}]_{n,n} \in \{0,1\}, \forall n \in \mathcal{N},  \label{P1_index} \\
       & [{\bf{S}}]_{i,j}=0, \forall i,j \in \mathcal{N}, i \neq j, \label{P1_index_diag} \\
       & ||{\rm diag}({\bf{S}})||_0=a, \label{P1_index_number}, \\ 
       &  {\bf{S}}_a^H{\bf{S}}_a={\bf{S}}, \label{mode equ}
       
  \end{align} 
 \end{subequations}
where \eqref{power_subcarrier} denotes the transmit power budget, and \eqref{P1_RIS} represents the unit-modulus constraints for the reflection coefficients at the RDARS. The binary constraints \eqref{P1_index} and \eqref{P1_index_diag} indicate the operational mode for each element at the RDARS, while the cardinality constraint \eqref{P1_index_number} limits the number of elements in connection mode to $a$.

Unfortunately, problem (P0) features a complex objective function and non-convex unit-modulus constraints, making its optimal solution challenging to determine. Furthermore, the binary and cardinality constraints complicate the problem further, rendering (P0) a mixed-integer optimization problem, which is intractable and NP-hard. To address these challenges, we propose an efficient AO algorithm in the following section.

\section{Joint Beamforming and Mode Selection}
In this section, we first employ penalty-based FP and SCA methods to address the binary and cardinality constraints in problem (P0) and then propose an efficient AO framework to optimize each variable iteratively.

\subsection{Mode Selection} 
For mode selection, it involves the effective channels $\{ {\bm{h}}_{\rm ub},{\bm{h}}_{\rm ur}, {\bm{h}}_{\rm eb},{\bm{h}}_{\rm er} \}$, which are influenced by ${\bf{S}}$ and ${\bf{S}}_a$. 
To address the coupling effect between these two mode selection matrices, we propose a penalty-based method to solve ${\bf{S}}$ and ${\bf{S}}_a$ iteratively. 
By introducing a penalty factor to the equality constraint in \eqref{mode equ}, we reformulate the mode selection subproblem as follows:
 \begin{subequations}\label{Problem mode}
  \begin{align}
  (\text{P1}): \ \max_{{\bf{S}},{\bf{S}}_a} \quad  \!\!& \frac{1 + \frac{1}{\sigma_u^2} |{\bm{h}}_{\rm ub}^H{\bm{\omega}}_b \!+\! {\bm{h}}_{\rm ur}^H{\bm{\omega}}_r |^2 }{ 1 + \frac{1}{\sigma_e^2} |{\bm{h}}_{\rm eb}^H{\bm{\omega}}_b \!+\! {\bm{h}}_{\rm er}^H{\bm{\omega}}_r |^2 } - \frac{\|\mathbf{S}\!-\!\mathbf{S}_{a}^H \mathbf{S}_{a}\|_F^{2}}{2\rho}    \label{P1_obj}  \\
 \!\!\mbox{s.t.} \quad 
        & \eqref{P1_index}, \eqref{P1_index_diag}, \eqref{P1_index_number}.     
  \end{align} 
 \end{subequations}
where $\rho>0$ is the penalty factor.  
In the sequel, we propose using the FP and SCA methods to solve this problem.
\subsubsection{Optimizing ${\bf{S}}$ with given ${\bf{S}}_a$}
We first decouple ${\bf{S}}$ from $ {\bm{h}}_{\rm ub}$ and $ {\bm{h}}_{\rm eb}$ in the objective function. Letting ${\bm{s}}={\rm diag}({\bf{S}} )$, the following equalities hold:
\begin{subequations}
\begin{align}
 {\bm{h}}_{\rm ub}^H{\bm{\omega}}_b \!+\! {\bm{h}}_{\rm ur}^H{\bm{\omega}}_r &= -{\bm{s}}^H {\bm{p}}_u + d_u, \\
 {\bm{h}}_{\rm eb}^H{\bm{\omega}}_b \!+\! {\bm{h}}_{\rm er}^H{\bm{\omega}}_r & = -{\bm{s}}^H {\bm{p}}_e + d_e,
\end{align}
\end{subequations}
with ${\bm{p}}_u={\rm diag}({\bm{h}}_{\rm RU}^H) {\bm{\Theta}} {\bf{G}}^H {\bm{\omega}}_b$ and $d_u = ({\bm{h}}_{\rm BU}^H + {\bm{h}}_{\rm RU}^H{\bm{\Theta}} {\bf{G}}^H) {\bm{\omega}}_b + {\bm{h}}_{\rm ur}^H{\bm{\omega}}_r$. Similarly, ${\bm{p}}_e={\rm diag}({\bm{h}}_{\rm RE}^H) {\bm{\Theta}} {\bf{G}}^H {\bm{\omega}}_b$ and $d_e = ({\bm{h}}_{\rm BE}^H + {\bm{h}}_{\rm RE}^H{\bm{\Theta}} {\bf{G}}^H) {\bm{\omega}}_b + {\bm{h}}_{\rm er}^H{\bm{\omega}}_r$.
Defining ${\bm{\tilde {s}}}\triangleq [{\bm{{s}}}^T,1]^T$, the subproblem for optimizing ${\bf{S}}$ is formulated as the following sum-of-ratio problem
\begin{subequations}
  \begin{align}
  (\text{P2}):\ \max_{{\bm{\tilde {s}}} } \quad  \!\!&    \frac{{\bm{\tilde{s}}}^H {{\mathbf{P}}}_{u}  {\bm{\tilde{s}}}}{{{\bm{\tilde{s}}}^H {{\mathbf{P}}}_{e}  {\bm{\tilde{s}}}}}-\frac{ {\bm{\tilde {s}}}^H {\bm{p}} }{2\rho} \label{P4_obj} \\
 \!\!\mbox{s.t.} \quad 
       & [{\bm{\tilde {s}}}]_{n} \in \{0,1\}, \forall n \in \mathcal{N}, \label{p4_1}\\
       & [{\bm{\tilde {s}}}]_{N+1} = 1, \label{p4_2} \\
       & {\bm{\tilde {s}}}^H{\bm{\tilde {s}}}=a+1, \label{p4_3}
  \end{align} 
 \end{subequations}
with ${\mathbf{P}}_{u}= \left[ {\sigma_u^{-2}}{\bm{p}}_u{\bm{p}}_u^H, -{\sigma_u^{-2}}d_u^*{\bm{p}}_u; -{\sigma_u^{-2}}d_u{\bm{p}}_u^H, {\sigma_u^{-2}}|d_u|^2 +1  \right]$, ${\mathbf{P}}_{e}= \left[ {\sigma_e^{-2}}{\bm{p}}_e{\bm{p}}_e^H, -{\sigma_e^{-2}}d_e^*{\bm{p}}_e; -{\sigma_e^{-2}}d_e{\bm{p}}_e^H, {\sigma_e^{-2}}|d_e|^2 +1  \right]$ and 
${\bm{p}}= [\operatorname{diag}(\mathbf{I}-2\mathbf{S}_{a}^H \mathbf{S}_{a});{\rm tr}(\mathbf{S}_{a}^H {\mathbf{S}}_{a})]$.
The multiple-ratio fractional problem is generally NP-hard. Therefore, we propose to equivalently reformulate the sum-of-ratio problem using the quadratic transformation described in the proposition below.
\begin{proposition}
    Problem (P2) is equivalent to  
    \begin{subequations}\label{multiple ratio}
  \begin{align}
   \max_{{\bm{\tilde {s}}},{\bm{\beta}} } \quad  \!\!&   2\Re\{  {\bm{\beta}}^H {\bm{\Lambda}}_{u}^{\frac{1}{2}}{\bf{U}}_{u}^H  {\bm{\tilde{s}}}\}-
  {{{\bm{\tilde{s}}}^H {{\mathbf{P}}}_{e}  {\bm{\tilde{s}}}}} {\bm{\beta}}^H{\bm{\beta}} -\frac{ {\bm{\tilde {s}}}^H {\bf{p}} }{2\rho} \label{multiple ratio obj} \\
 \!\!\mbox{s.t.} \quad 
       & \eqref{p4_1},\eqref{p4_2},\eqref{p4_3},
  \end{align} 
 \end{subequations}
where ${\bm{\beta}}\in\mathbb{C}^{N+1}$ refer to the auxiliary variable. ${\bf{U}}_{u}$ and ${\bm{\Lambda}}_{u}$ are obtained by the  eigen-value decomposition (EVD) w.r.t. ${\mathbf{P}}_{u}$, i.e., ${\mathbf{P}}_{u}={\bf{U}}_{u}{\bm{\Lambda}}_{u}{\bf{U}}_{u}^H$.
\end{proposition}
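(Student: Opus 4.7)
The plan is to recognize the reformulation as the standard quadratic transform of Shen and Yu for a single matrix ratio, and verify the equivalence by maximizing out the auxiliary variable $\bm{\beta}$. First I would observe that $\mathbf{P}_u$ is Hermitian positive semidefinite: indeed, $\mathbf{P}_u = \sigma_u^{-2}[\,\mathbf{p}_u^T,\,-d_u]^T[\,\mathbf{p}_u^H,\,-d_u^*] + \operatorname{diag}(\mathbf{0},1)$, so its EVD $\mathbf{P}_u = \mathbf{U}_u\bm{\Lambda}_u\mathbf{U}_u^H$ with $\bm{\Lambda}_u \succeq \mathbf{0}$ is well defined and the principal square root $\bm{\Lambda}_u^{1/2}$ exists. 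In particular $\tilde{\mathbf{s}}^H\mathbf{P}_u\tilde{\mathbf{s}} = \|\bm{\Lambda}_u^{1/2}\mathbf{U}_u^H\tilde{\mathbf{s}}\|^2$, which is the factorization needed to apply the quadratic transform.

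Next I would note that for any fixed feasible $\tilde{\mathbf{s}}$, the reformulated objective \eqref{multiple ratio obj} is a concave quadratic in the unconstrained variable $\bm{\beta}\in\mathbb{C}^{N+1}$, since $\tilde{\mathbf{s}}^H\mathbf{P}_e\tilde{\mathbf{s}} > 0$ (note $\mathbf{P}_e$ has the same PSD block structure as $\mathbf{P}_u$ with a strictly positive $(N{+}1,N{+}1)$ contribution, and $[\tilde{\mathbf{s}}]_{N+1}=1$). Setting the Wirtinger derivative w.r.t.\ $\bm{\beta}^*$ to zero gives the closed-form maximizer
\begin{equation}
\bm{\beta}^{\star}(\tilde{\mathbf{s}}) = \frac{\bm{\Lambda}_u^{1/2}\mathbf{U}_u^H\tilde{\mathbf{s}}}{\tilde{\mathbf{s}}^H\mathbf{P}_e\tilde{\mathbf{s}}}.
\end{equation}
Substituting $\bm{\beta}^{\star}(\tilde{\mathbf{s}})$ back into \eqref{multiple ratio obj} and using $\bm{\Lambda}_u^{1/2}\mathbf{U}_u^H$ being the ``square root'' of $\mathbf{P}_u$, the first two terms collapse to $\tilde{\mathbf{s}}^H\mathbf{P}_u\tilde{\mathbf{s}}/(\tilde{\mathbf{s}}^H\mathbf{P}_e\tilde{\mathbf{s}})$, so the whole expression reduces exactly to the original objective \eqref{P4_obj}.

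Finally I would argue equivalence at the problem level. Since the feasible set of $\tilde{\mathbf{s}}$ (the binary/cardinality constraints \eqref{p4_1}--\eqref{p4_3}) is unchanged and the linear penalty term $\tilde{\mathbf{s}}^H\mathbf{p}/(2\rho)$ appears identically in both objectives, maximizing \eqref{multiple ratio obj} jointly over $(\tilde{\mathbf{s}},\bm{\beta})$ can be carried out by first maximizing over $\bm{\beta}$ for each fixed $\tilde{\mathbf{s}}$; this inner maximization has the unique closed-form solution $\bm{\beta}^{\star}(\tilde{\mathbf{s}})$ above and yields precisely the objective of (P2). Hence the optimal $\tilde{\mathbf{s}}$ and the optimal values of the two problems coincide, which is the desired equivalence.

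The main obstacle I anticipate is bookkeeping rather than conceptual: one must be careful that the factorization $\mathbf{P}_u = (\mathbf{U}_u\bm{\Lambda}_u^{1/2})(\mathbf{U}_u\bm{\Lambda}_u^{1/2})^H$ is compatible with the complex Wirtinger calculus used to derive $\bm{\beta}^{\star}$, and that $\tilde{\mathbf{s}}^H\mathbf{P}_e\tilde{\mathbf{s}}$ is strictly positive on the feasible set so that the ratio in (P2) and the inverse in $\bm{\beta}^{\star}$ are both meaningful; the observation $[\tilde{\mathbf{s}}]_{N+1}=1$ together with the PSD-plus-$\operatorname{diag}(\mathbf{0},1)$ structure of $\mathbf{P}_e$ secures this.
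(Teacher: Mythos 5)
Your proposal is correct and follows essentially the same route as the paper: the paper's proof likewise states the optimal auxiliary variable $\bm{\beta}^{\rm opt}=({\bm{\tilde{s}}}^H {\mathbf{P}}_{e}{\bm{\tilde{s}}})^{-1}{\bm{\Lambda}}_{u}^{1/2}{\bf{U}}_{u}^H{\bm{\tilde{s}}}$ and notes that substituting it back recovers the objective of (P2), invoking the standard quadratic-transform argument of fractional programming. Your additional checks (positive semidefiniteness of ${\mathbf{P}}_u$ and strict positivity of ${\bm{\tilde{s}}}^H{\mathbf{P}}_e{\bm{\tilde{s}}}$ via $[{\bm{\tilde{s}}}]_{N+1}=1$) are sound and merely make explicit what the paper leaves implicit.
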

\begin{proof}
    It can be easily inferred that the optimal solution of \eqref{multiple ratio} is ${\bm{\beta}}^{\rm opt}=({{{\bm{\tilde{s}}}^H {{\mathbf{P}}}_{e}  {\bm{\tilde{s}}}}})^{-1}{\bm{\Lambda}}_{u}^{\frac{1}{2}}{\bf{U}}_{u}^H  {\bm{\tilde{s}}}$ and the optimal objective of \eqref{multiple ratio obj} equals to \eqref{P4_obj} exactly. Thus, the equivalence to \eqref{multiple ratio} is therefore established \cite{8314727}.
\end{proof}
Since the non-convex quadratic problem in \eqref{multiple ratio} with binary constraints is still difficult to solve and has no closed-form solution, a convex surrogate function of the quadratic term ${\bm{\tilde{S}}}^H {{\mathbf{P}}}_{e}  {\bm{\tilde{s}}}$ is therefore found based on the second-order Taylor expansion at $t$-th iteration, i.e., 
\begin{align}
{\bm{\tilde{s}}}^H {{\mathbf{P}}}_{e}  {\bm{\tilde{s}}} \leq & 2\lambda_{\rm max}(a+1)-{\bm{\tilde{s}}}_{t}^H {{\mathbf{P}}}_{e} {\bm{\tilde{s}}}_{t} + \notag\\
& 2\Re\{  {\bm{\tilde{s}}}_{t}^H ( {{\mathbf{P}}}_{e}-\lambda_{\rm max}{\bf{I}}){\bm{\tilde{s}}} \},
\end{align}
where $\lambda_{\rm max}$ represents the maximum eigenvalue of ${{\mathbf{P}}}_{e}$.
The problem for optimizing ${\bm{\tilde s}}$ is finally formulated as 
  \begin{align}
   \max_{{\bm{\tilde {s}}} } \quad  \!\!   \Re\{ {\bm{\gamma}}^H {\bm{\tilde {s}}} \}
~\quad \mbox{s.t.}  ~
  \eqref{p4_1},\eqref{p4_2},\eqref{p4_3},
  \end{align} 
with ${\bm{\gamma}}={\bf{U}}_{u} {\bm{\Lambda}}_{u}^{\frac{1}{2}} {\bm{\beta}}+ ( \lambda_{\rm max}{\bf{I}}-{{\mathbf{P}}}_{e}){\bm{\tilde{s}}}_{t}{\bm{\beta}}^H {\bm{\beta}} - \frac{1}{4\rho}{\bf{p}}$. Denoting the index set corresponding to the first $N$ largest value of $\Re\{{\bm{\gamma}}\}$  as ${\mathcal{A}}^{\rm opt}$, the optimal ${\bm{ {s}}}^{\rm opt}$ can be obtained as
 \begin{align}\label{s update}
&[{\bm{{s}}}^{\rm opt}]_n=\begin{cases}
1,& \forall n \in {\mathcal{A}}^{\rm opt},\\
0, &\forall n \notin {\mathcal{A}}^{\rm opt}.
\end{cases}
\end{align}

\subsubsection{Optimizing ${\bf{S}}_a$ with given ${\bf{S}}$}
To decouple ${\bf{S}}_a$ from the objective function in \eqref{Problem mode},
we define ${\bm{z}} \triangleq {\rm vec}({\bm{S}}_a^T)$ and ${\bm{\tilde z}}\triangleq[{\bm{z}}^T,1]^T$ with ${\bm{z}}=\left[ {\bm{z}}_{1}^{T}, {\bm{z}}_{2}^{T}, \dots, {\bm{z}}_{a}^{T} \right]^T$ and ${\bm{z}}_i \neq {\bm{z}}_j, \forall i \neq j$.
${\bm{z}}_i \in \mathcal{S}_{a}\triangleq \{ {\bm{z}}\in \mathbb{R}^{N} | {\bf{1}}^T{\bm{z}}=1, [{\bm{z}}]_n \in \{0,1\},\forall n \}$ denotes that each ${\bm{z}}_i$ belongs to a special ordered set of type 1 (SOS1).   
Thus, the subproblem for optimizing ${\bf{S}}_a$ is formulated as
\begin{subequations}
  \begin{align}
   (\text{P3}):\ \max_{ {\bm{\tilde z}} } \quad  \!\!&    \frac{ {\bm{\tilde z}}^H {\mathbf{H}}_{u} {\bm{\tilde z}} } { {\bm{\tilde z}}^H {\mathbf{H}}_{e}  {\bm{\tilde z}} } - {\bm{\tilde z}}^H{\mathbf{M}} {\bm{\tilde z}}  \\
 \!\!\mbox{s.t.} \quad  
        & [{\bm{z}}_i]_n \in \{0,1\}, \forall n\in \mathcal{N}, i=\{1,...,a\}, \label{p3_1}\\
        & [{\bm{z}}]_{aN+1}=1, \label{p3_2}\\
        & {\bm{z}}_i^T{\bm{z}}_i=1, \forall  i=\{1,...,a\}, \label{p3_3}
  \end{align} 
 \end{subequations}
where ${\mathbf{H}}_{u}=[{\sigma_u^{-2}}({\bm{\omega}}_r\otimes{\bm{h}}_{\rm RU})({\bm{\omega}}_r^H\otimes{\bm{h}}_{\rm RU}^H),{\sigma_u^{-2}}({\bm{\omega}}_r{\bm{\omega}}_b^H{\bm{h}}_{\rm ub})\otimes{\bm{h}}_{\rm RU};{\sigma_u^{-2}}({\bm{\omega}}_r^H{\bm{h}}_{\rm ub}^H{\bm{\omega}}_b)\otimes{\bm{h}}_{\rm RU}^H,{\sigma_u^{-2}}|{\bm{h}}_{\rm ub}^H{\bm{\omega}}_b|^2+1]$, ${\mathbf{H}}_{e}=[{\sigma_e^{-2}}({\bm{\omega}}_r\otimes{\bm{h}}_{\rm RE})({\bm{\omega}}_r^H\otimes{\bm{h}}_{\rm RE}^H),{\sigma_e^{-2}}({\bm{\omega}}_r{\bm{\omega}}_b^H{\bm{h}}_{\rm eb})\otimes{\bm{h}}_{\rm RE};{\sigma_e^{-2}}({\bm{\omega}}_r^H{\bm{h}}_{\rm eb}^H{\bm{\omega}}_b)\otimes{\bm{h}}_{\rm RE}^H,{\sigma_e^{-2}}|{\bm{h}}_{\rm eb}^H{\bm{\omega}}_b|^2+1]$ and  ${\bf{M}}=[-\rho^{-1}({\bf{I}}_a\otimes{\bf{S}}),0;0,0]$.
 The same methodology for solving problem (P2) can be leveraged to address problem (P3).
Briefly, the problem (P3) can be transformed into the following problem with FP and SCA approaches:
  \begin{align}\label{s_a}
   \max_{{\bm{\tilde {z}}} } \quad  \!\!   \Re\{ {\bm{\epsilon}}^H {\bm{\tilde {z}}} \}
~\quad \mbox{s.t.}  ~
  \eqref{p3_1},\eqref{p3_2},\eqref{p3_3},
  \end{align} 
 where ${\bm{\epsilon}}={\bf{U}}_e{\bf{\Lambda}}_e^{\frac{1}{2}}{\bm{\eta}}+({\mu_{\rm max}}{\bf{I}}-{\bf{H}}_e{\bm{\eta}}^H{\bm{\eta}}-{\bf{M}}){\bm{\tilde {z}}}_t$ with  ${\bf{U}}_e$ and ${\bf{\Lambda}}_e$ are obtained by EVD w.r.t ${\bf{H}}_u$. ${\mu_{\rm max}}$ represents the maximum eigenvalue of ${\bf{H}}_e{\bm{\eta}}^H{\bm{\eta}}+{\bf{M}}$ and the auxiliary variable ${\bm{\eta}}$ is updated with ${\bm{\eta}}^{\rm opt}=({\bm{\tilde z}}^H {\mathbf{H}}_{e} {\bm{\tilde z}})^{-1}{\bf{\Lambda}}_e^{\frac{1}{2}}{\bf{U}}_e^H{\bm{\tilde {z}}} $.
 Regarding problem \eqref{s_a}, the optimal ${\bm{ {z}}}^{\rm opt}$ can be obtained via the one-dimensional research method.

\subsection{Active Beamforming}
Denoting ${\mathbf{R}}_{u}={\sigma_u^{-2}} \left[ {\bm{h}}_{\rm ub}^H, {\bm{h}}_{\rm ur}^H \right]^H \left[ {\bm{h}}_{\rm ub}^H, {\bm{h}}_{\rm ur}^H \right]$ and $ {\mathbf{R}}_e={\sigma_e^{-2}} \left[ {\bm{h}}_{\rm eb}^H, {\bm{h}}_{\rm er}^H \right]^H \left[ {\bm{h}}_{\rm eb}^H, {\bm{h}}_{\rm er}^H \right]$, the subproblem for active beamforming can be formulated as the following generalized Rayleigh quotient:
  \begin{align}
   (\text{P4}): \ \max_{{\bm{\omega}}} \quad  \!\! \frac{{\bm{\omega}}^H {\mathbf{R}}_u {\bm{\omega}}+1}{{\bm{\omega}}^H {\mathbf{R}}_e {\bm{\omega}}+1} 
 \quad \mbox{s.t.} ~ 
        ||{\bm{\omega}}||^2 \leq P,
  \end{align} 
 where ${\bm{\omega}}=\left[{\bm{\omega}}_{b}^H, {\bm{\omega}}_{r}^H \right]^H$ represents the total active beamforming vector.
 Thus, the optimal solution ${\bm{\omega}}^{\rm opt}$ can be readily derived as 
%
\begin{align}\label{w update}
&{\bm{\omega}}^{\rm opt}= \sqrt{P}{\bf{u}}_{\rm max},
\end{align}
where ${\bf{u}}_{\rm max}$ is the normalized eigenvector corresponding to the
largest eigenvalue of the matrix $\left({\mathbf{R}}_e+\frac{1}{\sqrt{P}}\bf{I}\right)^{-1}\left({\mathbf{R}}_u+\frac{1}{\sqrt{P}}\bf{I}\right)$.\\

\begin{algorithm}[t] 
    \caption{Proposed Algorithm for Joint Beamforming and Mode Selection}     
     \label{xx}       
    \begin{algorithmic}[1] 
    \Require Set ${\bf{S}}^0$, ${\bf{\Theta}}^0$, ${\bm{\omega}}^0$ and $\rho^0$    
\Repeat
   \Repeat
   \State Update ${\bf{S}}$ via \eqref{s update}.
   \State Update ${\bf{S}}_a$ by solving problem \eqref{s_a}.
   \State Update penalty factor by $\rho=\alpha \rho$.
   \Until $\|\mathbf{S}\!-\!\mathbf{S}_{a}^H \mathbf{S}_{a}\|_F\leq \epsilon$.
   \State Update ${\bm{\omega}}$ by ${\bm{\omega}}^{\rm opt}= \sqrt{P}{\bf{u}}_{\rm max}$ in \eqref{w update}.
   \State Update ${\bf{\Theta}}$ by solving problem \eqref{theta problem} with Charnes-Cooper transformation and SDP. 
   
   \Until {the convergence is satisfied.}
   \Ensure  ${\bf{\Theta}}^{\star}$, ${\bm{\omega}}^{\star}$, ${\bf{S}}^{\star}$, ${\bf{S}}_a^{\star}$   
   \end{algorithmic} 
\end{algorithm} 

\vspace{-6pt}
\subsection{Reflection Coefficients}
Since the objective function and the unit-modulus constraints have a similar structure as the conventional RIS-aided ones, the reflection coefficient optimization algorithm derived in \cite{cui2019secure} can be applied to solve for the reflection coefficients.
In brief, we have  
\begin{subequations}
\begin{align}
 {\bm{h}}_{\rm ub}^H{\bm{\omega}}_b \!+\! {\bm{h}}_{\rm ur}^H{\bm{\omega}}_r &= {\bm{ \theta}}^H {\bf{q}}_u + c_u, \\
 {\bm{h}}_{\rm eb}^H{\bm{\omega}}_b \!+\! {\bm{h}}_{\rm er}^H{\bm{\omega}}_r & = {\bm{ \theta}}^H {\bf{q}}_e + c_e,
\end{align}
\end{subequations}
with ${\bf{q}}_u={\rm diag}({\bm{h}}_{\rm RU}^H)({\bf{I}}-{\bf{S}}) {\bf{G}}^H {\bm{\omega}}_b$ and $c_u = {\bm{h}}_{\rm BU}^H {\bm{\omega}}_b + {\bm{h}}_{\rm ur}^H{\bm{\omega}}_r$. Similarly, ${\bf{q}}_e={\rm diag}({\bm{h}}_{\rm RE}^H)({\bf{I}}-{\bf{S}}) {\bf{G}}^H {\bm{\omega}}_b$ and $c_e = {\bm{h}}_{\rm BE}^H {\bm{\omega}}_b + {\bm{h}}_{\rm er}^H{\bm{\omega}}_r$. By defining an auxilliary variable ${\bm{\tilde \theta}}=({\bm{ \theta}}^T,1)^T$, the reflection coefficient optimization problem can be recast to 
\begin{subequations}\label{theta problem}
  \begin{align}
  (\text{P5}):\ \max_{ {\bm{\tilde \theta}} } \quad  \!\!& \frac{{\bm{\tilde \theta}}^H{\mathbf{Q}}_{u}{\bm{\tilde \theta}}+{\tilde c}_{u}+1}{{\bm{\tilde \theta}}^H{\mathbf{Q}}_{e}{\bm{\tilde \theta}}+{\tilde c}_{e}+1}     \\
 \!\!\mbox{s.t.} \quad 
       &  {\bm{\tilde \theta}}^H{\mathbf{Y}}_{n}{\bm{\tilde \theta}}=1,\forall{n}.\label{theta constrant}
  \end{align} 
 \end{subequations}
with ${\mathbf{Q}}_{u}={\sigma_u^{-2}} \left[ {\bf{q}}_u{\bf{q}}_u^H, c_u^*{\bf{q}}_u; c_u{\bf{q}}_u^H,0  \right]$, ${\tilde c}_{u}={\sigma_u^{-2}}|c_u|^2$,
 ${\mathbf{Q}}_{e}={\sigma_e^{-2}} \left[ {\bf{q}}_e{\bf{q}}_e^H, c_e^*{\bf{q}}_e; c_e{\bf{q}}_e^H,0  \right]$ and ${\tilde c}_{e}={\sigma_e^{-2}}|c_e|^2$.
The $(i,j)$-th element of ${\mathbf{Y}}_{n}$, denoted by $[{\mathbf{Y}}_{n}]_{i,j}$, satisfies
 \begin{align}
       &[{\mathbf{Y}}_{n}]_{i,j}=\begin{cases}
1,& \text{$i=j=n$},\\
0, &\text{otherwise}.
\end{cases}
\end{align}

Fortunately, the above non-convex fractional quadratic optimization problem (P5) has been significantly studied. Specifically, the Charnes-Cooper transformation can be applied to convert problem (P5) into an equivalent non-fractional convex semidefinite programming (SDP) problem, which can be optimally solved using the interior-point method. For simplicity, we will omit the details here.

\vspace{-6pt}
\subsection{Overall Algorithm}
The overall procedure for solving problem (P0) is presented in {\bf{Algorithm 1}}. 
%
In the respective optimization process for the variables $\{ {\bf{S}}, {\bf{S}}_a, {\bf{\Theta}}, {\bm{\omega}} \}$, the objective value $R_s$ keeps non-decreasing, i.e., $R_s({\bf{\Theta}}^1, {\bm{\omega}}^1, {\bf{S}}^1, {\bf{S}}_a^1)\leq R_s({\bf{\Theta}}^2, {\bm{\omega}}^2, {\bf{S}}^2, {\bf{S}}_a^2)\leq \cdots \leq R_s({\bf{\Theta}}^i, {\bm{\omega}}^i, {\bf{S}}^i, {\bf{S}}_a^i)$, with ${\bf{\Theta}}^i, {\bm{\omega}}^i, {\bf{S}}^i, {\bf{S}}_a^i$ denoting the optimized ${\bf{\Theta}}, {\bm{\omega}}, {\bf{S}}, {\bf{S}}_a$ in $i$-th iteration.
Furthermore, the objective is upper-bounded by the constraints $\eqref{power_subcarrier}-\eqref{mode equ}$. As a result, the proposed algorithm is guaranteed to converge to a stationary point. 
 The complexity of {\bf{Algorithm 1}} primarily lies in steps 2 to 8. 
For steps 2 to 6, the complexity is calculated as ${\mathcal{O}}(I_{p}(N+1)^3)$, where $I_{p}$ denotes the number of iterations for the penalty-based iteration.
In contrast, the computational complexities for steps 7 and 8 are ${\mathcal{O}}((N_t+a)^3)$ and ${\mathcal{O}}((N+1)^{3.5})$, respectively. 
 Assuming the total number of iterations in the AO process is $I_{AO}$, the overall computational complexity of {\bf{Algorithm 1}} can be expressed 
as ${\mathcal{O}}\left(I_{AO}((N_t+a)^3+(N+1)^{3.5})\right)$.

\vspace{-12pt}
\section{Simulation}

\begin{figure}[t]
    \centering  
     \includegraphics[width=0.35\textwidth]{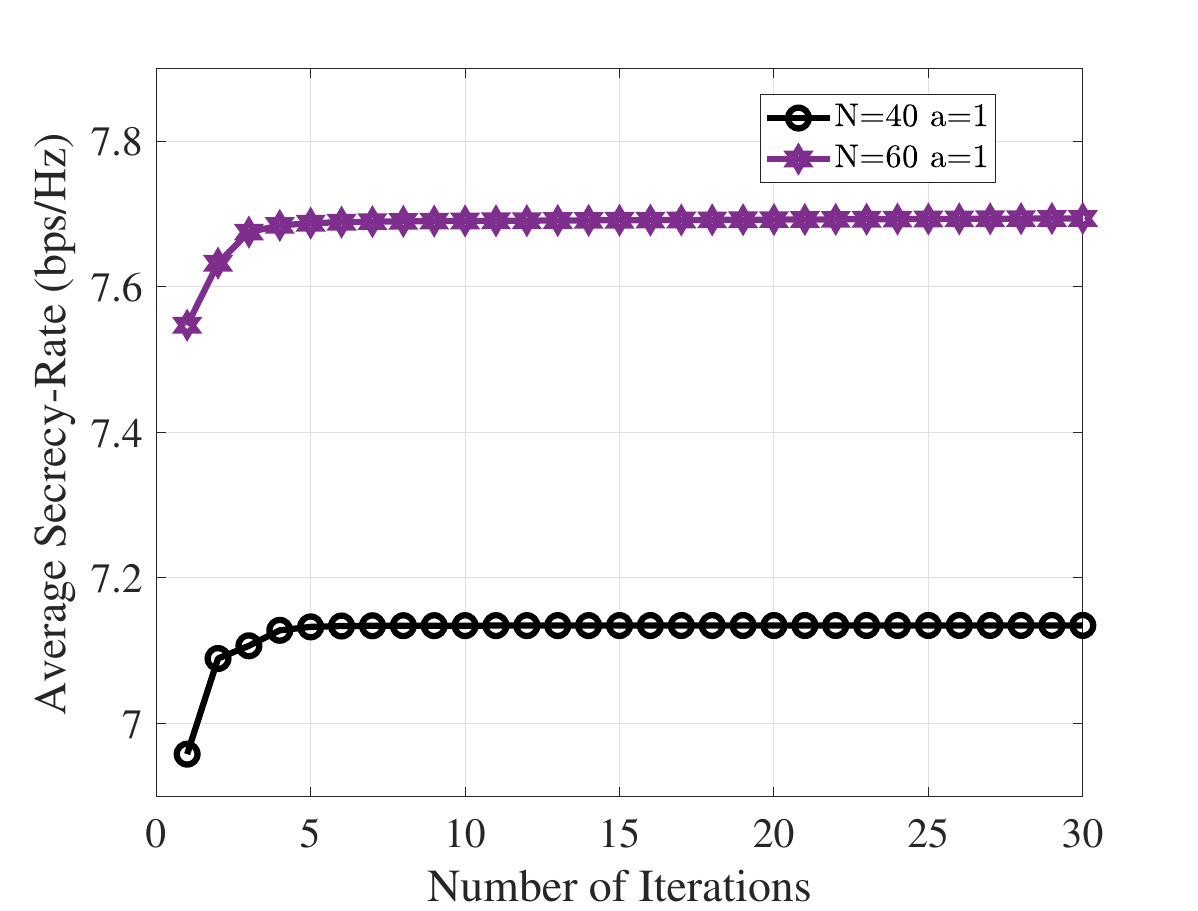} 
     \caption{Convergence behaviors of the proposed algorithm.}
     \vspace{-12pt}
     \label{conv} 
\end{figure}
In this section, we present numerical simulations to evaluate the performance of our proposed algorithm for joint beamforming design and channel-aware mode selection in RDARS-driven secure transmissions.
For simplicity, the BS and the RDARS are located at $(0, 0)$m and $(10, 5)$m, while the locations of UE and EVE are set as $(45, 0)$m and $(40, 0)$m, respectively. 
The small-scale fading in all-related channels is assumed to be Rician fading with a Rician factor of $1$ \cite{1237126}. 
The large-scale path loss is modeled as ${\rm PL}=\sqrt{\beta_0(d_0/d)^{\alpha_0}}$ where $\beta_0\!=\!-30$ dB denotes the path loss at the reference distance $d_0\!=\!1$m.
The path loss exponents are set as $\alpha_{\rm BU}\!=\!\alpha_{\rm BE}\!=\!3$ and  $\alpha_{\rm RU}\!=\!\alpha_{\rm RE}\!=\!\alpha_{\rm G}\!=\!2$. 
Unless otherwise stated, we set $N_t\!=\!5$, $\sigma_u^2\!=\!\sigma_e^2\!=\!\sigma_a^2\!=\!-90$ dBm,  $\epsilon\!=\!10^{-3}$ and $\rho^0\!=\!10^6$. All simulation results are obtained by averaging over 500 channel realizations. 

Several existing schemes are adopted as benchmarks, including conventional ``passive RIS'', ``active RIS'' and ``DAS''. ``RDARS Opt'' and ``RDARS w/o MS'' denote the RDARS-driven secure transmissions with the proposed penalty-based AO optimization and without mode selection, respectively.
For a fair comparison, we assume the number of BS antennas at ``passive RIS'' and ``active RIS'' is equal to the sum of BS antennas and connected elements at `RDARS Opt'' and ``RDARS w/o MS'', i.e., $N_t^{\rm PRIS/ARIS}\!=\!N_t^{\rm RDARS}\!+\!a$.

Fig.~\ref{conv} shows the convergence behavior of the proposed algorithm with varying numbers of RDARS elements. The simulation results demonstrate that the penalty-based AO approach converges quickly, typically within 10 iterations, across different configurations of RDARS elements.

In Fig.~\ref{fig_E1}, we compare the secrecy rate performance of different schemes under varying system parameters. 
Fig.~\ref{fig_E1} (a) shows the rate performance versus maximum transmit power $P$ when $a\!=\!2$ and $N\!=\!100$.
The RDARS-driven system achieves the highest secrecy rate among the schemes, demonstrating the appealing potential of its structure. Even with just one element in connection mode, the RDARS-driven system  significantly outperforms both active and passive RISs-assisted ones. Furthermore, ``RDARS w/o MS'' shows a performance improvement compared to DAS, which can be attributed to the reflection gain provided by the elements in the reflection modes. The performance gap between ``RDARS w/o MS'' and ``RDARS Opt'' highlights the selection gain achieved through the channel-aware mode selection of our proposed algorithm.
Fig.~\ref{fig_E1} (b) illustrates the rate performance versus the number of elements in RDARS when $P\!=\!15$ dBm and $a\!=\!2$.
 It can be observed that the secrecy rate increases as the total number of elements increases, except for ``DAS'', due to the reflection gains. Similarly, RDARS-driven schemes continue to perform the best even with a small number of connected elements.
Fig.~\ref{fig_E1} (c) presents the secrecy rate performance versus the number of
elements in the connection mode when $P\!=\!15$ dBm and $N\!=\!100$.  
The performance of ``passive RIS" and ``active RIS'' increases as the number of elements in the connection mode increases, as the number of BS antennas increases with a fair comparison.
Both RDARS-driven schemes and ``DAS" show significant performance enhancement, demonstrating the effectiveness of incorporating DAS to mitigate the effects of multiplicative fading. On the other hand, the performance gap between ``RDARS Opt'' and ``DAS'' decreases as $a$ increases. This can be attributed to the fact that the DoF for channel-aware mode placement of connected elements decreases and the distribution gain gradually dominates the reflection gain.


\begin{figure*}[htbp]
    \centering
    \subfigure[$P$] {\includegraphics[width=.33\textwidth]{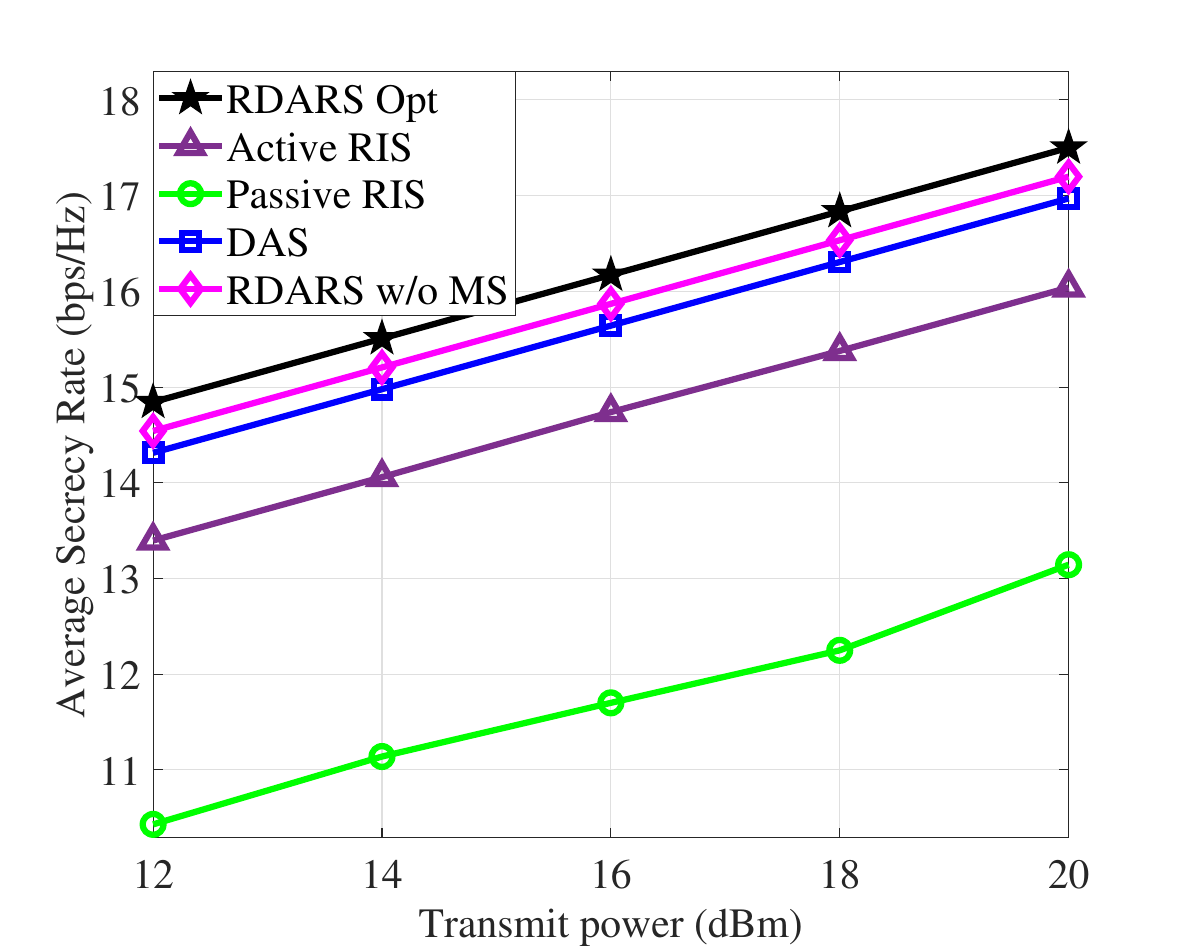}}
    \hspace{-6pt}
    \subfigure[$N$] {\includegraphics[width=.33\textwidth]{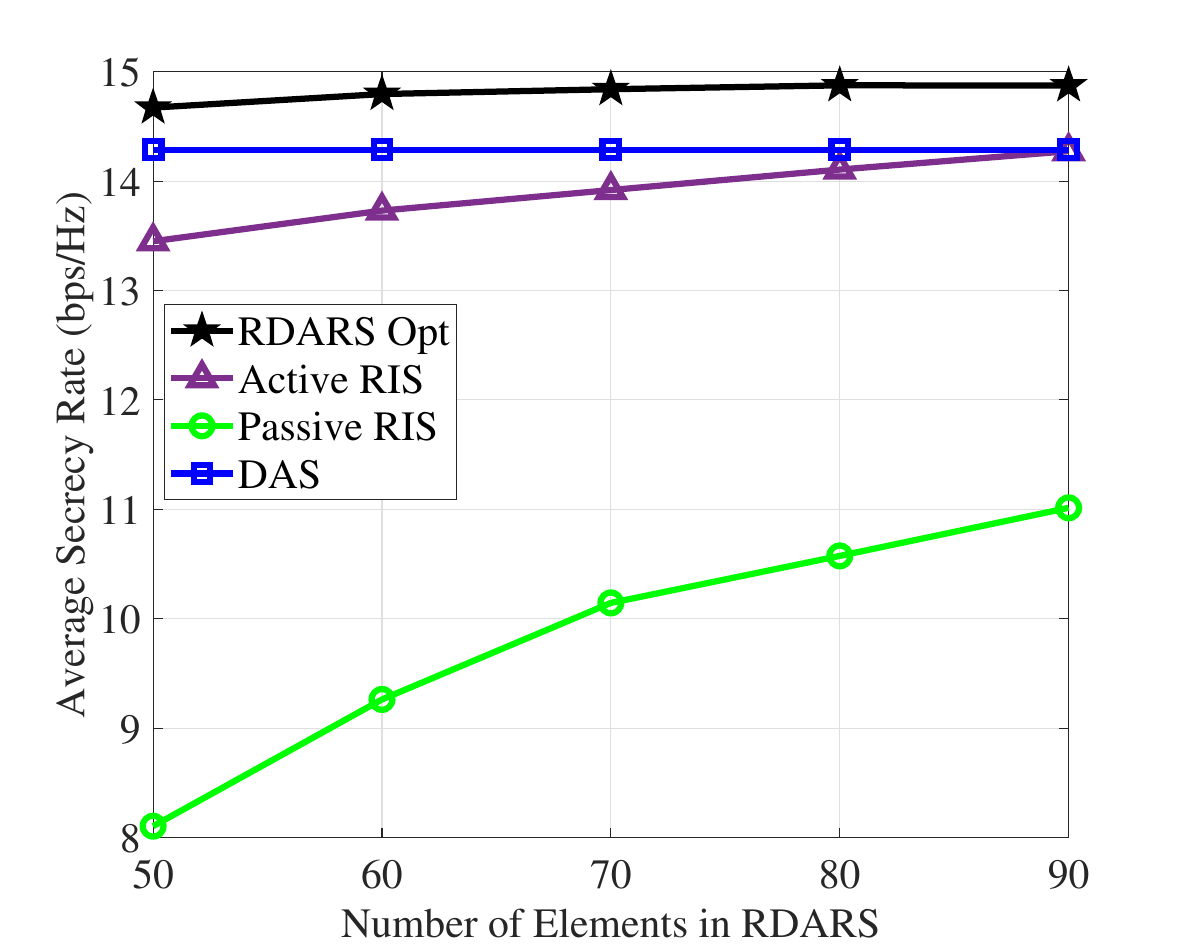}}
    \hspace{-6pt}
    \subfigure[$a$] {\includegraphics[width=.33\textwidth]{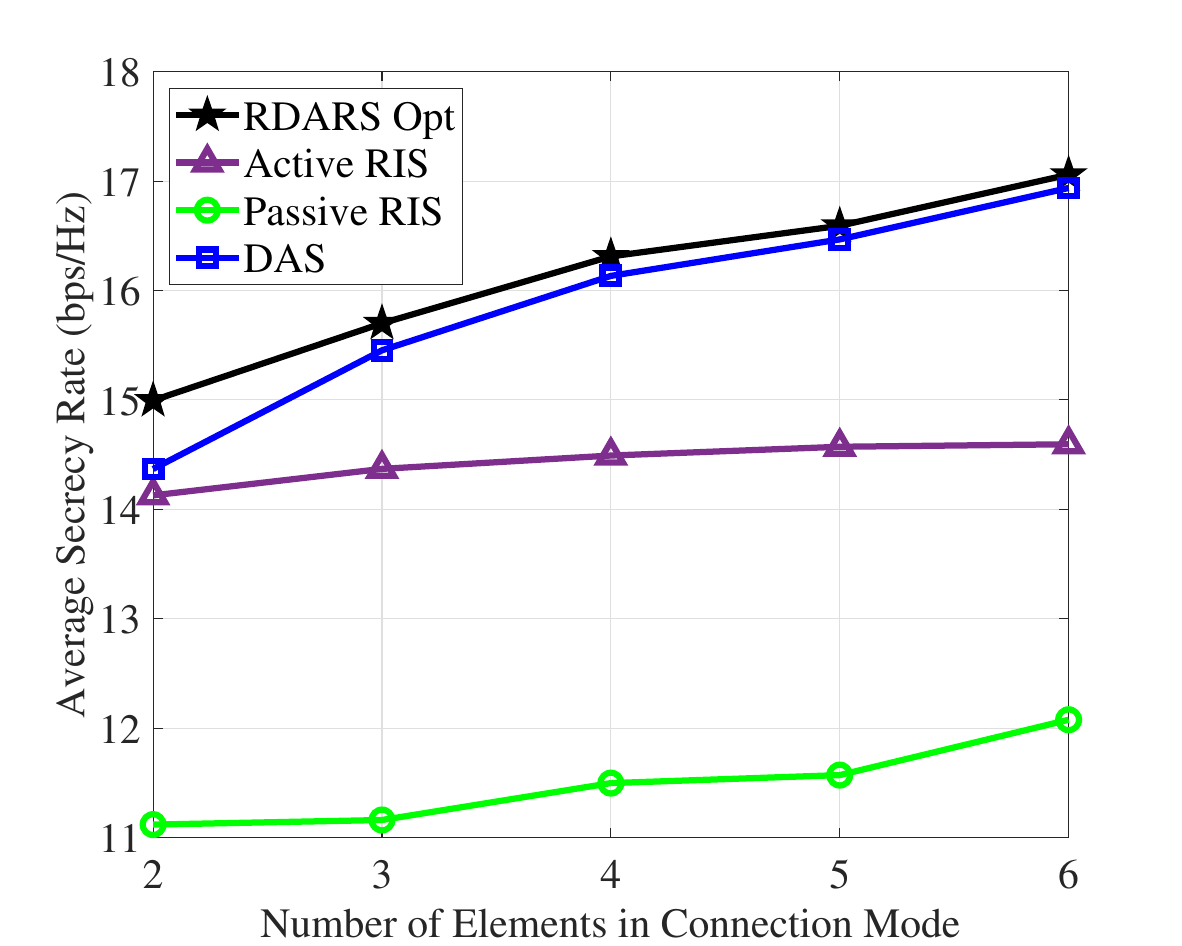}}
    \caption{Comparison of secrecy rate versus system parameters (a) Transmit Power $P$ (b) Number of Elements in RDARS $N$ (c) Number of Elements in connection mode $a$. }
    \vspace{-12pt}
    \label{fig_E1}
\end{figure*}

\vspace{-12pt}
\section{Conclusion}
In this letter, we proposed an innovative RDARS-driven secure communication system and investigated joint channel-aware mode selection and beamforming design to maximize the secrecy rate. Compared to conventional passive RIS-aided systems, the binary and cardinality constraints in RDARS further complicated the secrecy maximization problem. To address these challenges, we presented an efficient AO framework utilizing penalty-based FP and SCA methods. The simulations validated the potential of RDARS to enhance the secrecy rate and demonstrated its superiority over existing reflection surface-based schemes.

\bibliographystyle{IEEEtran}
\bibliography{RDARS_Secrecy}

\begin{thebibliography}{10}
\providecommand{\url}[1]{#1}
\csname url@samestyle\endcsname
\providecommand{\newblock}{\relax}
\providecommand{\bibinfo}[2]{#2}
\providecommand{\BIBentrySTDinterwordspacing}{\spaceskip=0pt\relax}
\providecommand{\BIBentryALTinterwordstretchfactor}{4}
\providecommand{\BIBentryALTinterwordspacing}{\spaceskip=\fontdimen2\font plus
\BIBentryALTinterwordstretchfactor\fontdimen3\font minus \fontdimen4\font\relax}
\providecommand{\BIBforeignlanguage}[2]{{%
\expandafter\ifx\csname l@#1\endcsname\relax
\typeout{** WARNING: IEEEtran.bst: No hyphenation pattern has been}%
\typeout{** loaded for the language `#1'. Using the pattern for}%
\typeout{** the default language instead.}%
\else
\language=\csname l@#1\endcsname
\fi
#2}}
\providecommand{\BIBdecl}{\relax}
\BIBdecl

\bibitem{10596064}
E.~Basar, G.~C. Alexandropoulos, Y.~Liu, Q.~Wu, S.~Jin, C.~Yuen, O.~A. Dobre, and R.~Schober, ``Reconfigurable intelligent surfaces for {6G}: Emerging hardware architectures, applications, and open challenges,'' \emph{IEEE Veh. Technol. Mag.}, vol.~19, no.~3, pp. 27--47, 2024.

\bibitem{9475160}
C.~Pan, H.~Ren, K.~Wang, J.~F. Kolb, M.~Elkashlan, M.~Chen, M.~Di~Renzo, Y.~Hao, J.~Wang, A.~L. Swindlehurst, X.~You, and L.~Hanzo, ``Reconfigurable intelligent surfaces for {6G} systems: Principles, applications, and research directions,'' \emph{IEEE Commun. Mag.}, vol.~59, no.~6, pp. 14--20, 2021.

\bibitem{zhang2023active}
Z.~Zhang, L.~Dai, X.~Chen, C.~Liu, F.~Yang, R.~Schober, and H.~V. Poor, ``Active {RIS} vs. passive {RIS}: Which will prevail in {6G}?'' \emph{IEEE Trans. Commun.}, vol.~71, no.~3, pp. 1707--1725, 2023.

\bibitem{ma2023reconfigurable}
C.~Ma, X.~Yang, J.~Wang, G.~Yang, W.~Zhang, and S.~Ma, ``Reconfigurable distributed antennas and reflecting surface: A new architecture for wireless communications,'' \emph{IEEE Trans. Commun.}, vol.~72, no.~10, pp. 6583--6598, 2024.

\bibitem{ma2024}
C.~Ma, J.~Wang, X.~Yang, G.~Yang, W.~Zhang, and S.~Ma, ``{RDARS} empowered massive {MIMO} system: Two-timescale transceiver design with imperfect {CSI},'' \emph{IEEE Trans. Wireless Commun.}, vol.~23, no.~12, pp. 18\,806--18\,821, 2024.

\bibitem{wang2024}
J.~Wang, C.~Ma, S.~Gong, X.~Yang, and S.~Ma, ``Joint beamforming optimization and mode selection for {RDARS}-aided {MIMO} systems,'' \emph{IEEE Trans. Wireless Commun.}, vol.~23, no.~11, pp. 17\,557--17\,572, 2024.

\bibitem{wang2024spawc}
J.~Wang, C.~Ma, and S.~Ma, ``Optimal design of {RDARS}-aided multi-user systems with low-resolution {DACs},'' in \emph{2024 IEEE 25th Int. Workshop on Signal Process. Adv. Wireless Commun. (SPAWC)}, 2024, pp. 536--540.

\bibitem{10233300}
J.~Wang, C.~Ji, J.~Guo, and S.~Ma, ``Demo: Reconfigurable distributed antennas and reflecting surface {(RDARS)}-aided integrated sensing and communication system,'' in \emph{2023 IEEE/CIC Int. Conf. Commun. China (ICCC)}, 2023, pp. 1--2.

\bibitem{cui2019secure}
M.~Cui, G.~Zhang, and R.~Zhang, ``Secure wireless communication via intelligent reflecting surface,'' \emph{IEEE Wireless Commun. Lett.}, vol.~8, no.~5, pp. 1410--1414, 2019.

\bibitem{dong2022active}
L.~Dong, H.-M. Wang, and J.~Bai, ``Active reconfigurable intelligent surface aided secure transmission,'' \emph{IEEE Trans. Veh. Technol.}, vol.~71, no.~2, pp. 2181--2186, 2022.

\bibitem{guan2020intelligent}
X.~Guan, Q.~Wu, and R.~Zhang, ``Intelligent reflecting surface assisted secrecy communication: Is artificial noise helpful or not?'' \emph{IEEE Wireless Commun. Lett.}, vol.~9, no.~6, pp. 778--782, 2020.

\bibitem{jin2021channel}
Y.~Jin, J.~Zhang, X.~Zhang, H.~Xiao, B.~Ai, and D.~W.~K. Ng, ``Channel estimation for semi-passive reconfigurable intelligent surfaces with enhanced deep residual networks,'' \emph{IEEE Trans. Veh. Technol.}, vol.~70, no.~10, pp. 11\,083--11\,088, 2021.

\bibitem{k.a2010secure}
A.~Khisti and G.~W. Wornell, ``Secure transmission with multiple antennas {I}: The {MISOME} wiretap channel,'' \emph{IEEE Trans. Inf. Theory}, vol.~56, no.~7, pp. 3088--3104, 2010.

\bibitem{8314727}
K.~Shen and W.~Yu, ``Fractional programming for communication systems—part {I}: Power control and beamforming,'' \emph{IEEE Trans. Signal Process.}, vol.~66, no.~10, pp. 2616--2630, 2018.

\bibitem{1237126}
M.~Chiani, M.~Win, and A.~Zanella, ``On the capacity of spatially correlated mimo {Rayleigh}-fading channels,'' \emph{IEEE Trans. Inf. Theory}, vol.~49, no.~10, pp. 2363--2371, 2003.

\end{thebibliography}


\begin{thebibliography}{99}  
\bibitem{ref1}Z. Zhang et al., "Active RIS vs. Passive RIS: Which Will Prevail in 6G?," in IEEE Transactions on Communications, vol. 71, no. 3, pp. 1707-1725, March 2023, doi: 10.1109/TCOMM.2022.3231893.
\bibitem{ref2}T. J. Cui, M. Q. Qi, X. Wan, J. Zhao, and Q. Cheng, “Coding metamaterials, digital metamaterials, and programmable metamaterials,” Light
Sci. Appl., vol. 3, no. 10, p. e218, Oct. 2014.
\bibitem{ref3}C. Ma, X. Yang, J. Wang, G. Yang, W. Zhang and S. Ma, "Reconfigurable Distributed Antennas and Reflecting Surface: A New Architecture for Wireless Communications," in IEEE Transactions on Communications, vol. 72, no. 10, pp. 6583-6598, Oct. 2024, doi: 10.1109/TCOMM.2024.3400915.
\bibitem{ref4}A. Moerman et al., "Beyond 5G Without Obstacles: mmWave-over-Fiber Distributed Antenna Systems," in IEEE Communications Magazine, vol. 60, no. 1, pp. 27-33, January 2022, doi: 10.1109/MCOM.001.2100550.
\bibitem{ref5}A. Khisti and G. W. Wornell, "Secure Transmission With Multiple Antennas I: The MISOME Wiretap Channel," in IEEE Transactions on Information Theory, vol. 56, no. 7, pp. 3088-3104, July 2010, doi: 10.1109/TIT.2010.2048445.
\bibitem{ref6}A. Mukherjee, S. A. A. Fakoorian, J. Huang, and A. L. Swindlehurst,
“Principles of physical layer security in multiuser wireless networks: A
survey,” IEEE Commun. Surveys Tuts., vol. 16, no. 3, pp. 1550–1573, 3rd Quart., 2014.
\bibitem{ref7}M. Cui, G. Zhang and R. Zhang, "Secure Wireless Communication via Intelligent Reflecting Surface," in IEEE Wireless Communications Letters, vol. 8, no. 5, pp. 1410-1414, Oct. 2019, doi: 10.1109/LWC.2019.2919685.
\bibitem{ref8}L. Dong, H. -M. Wang and J. Bai, "Active Reconfigurable Intelligent Surface Aided Secure Transmission," in IEEE Transactions on Vehicular Technology, vol. 71, no. 2, pp. 2181-2186, Feb. 2022, doi: 10.1109/TVT.2021.3135498.
\bibitem{ref9}X. Guan, Q. Wu, and R. Zhang, “Intelligent reflecting surface assisted secrecy communication: is artificial noise helpful or not?,” IEEE Wireless Commun. Letters, vol. 9, no. 6, pp. 778–782, Jun. 2020.
\bibitem{ref10}L. Liu, R. Zhang and K. -C. Chua, "Secrecy Wireless Information and Power Transfer With MISO Beamforming," in IEEE Transactions on Signal Processing, vol. 62, no. 7, pp. 1850-1863, April1, 2014, doi: 10.1109/TSP.2014.2303422.
\bibitem{ref11}M. Grant and S. Boyd, “CVX: MATLAB software for disciplined convex programming,” 2016. [Online]. Available: http://cvxr.com/cvx/
\bibitem{ref12}Q. Wu and R. Zhang, "Intelligent Reflecting Surface Enhanced Wireless Network via Joint Active and Passive Beamforming," in IEEE Transactions on Wireless Communications, vol. 18, no. 11, pp. 5394-5409, Nov. 2019, doi: 10.1109/TWC.2019.2936025.
\bibitem{ref13}K. Shen and W. Yu, "Fractional Programming for Communication Systems—Part I: Power Control and Beamforming," in IEEE Transactions on Signal Processing, vol. 66, no. 10, pp. 2616-2630, 15 May15, 2018, doi: 10.1109/TSP.2018.2812733.
\bibitem{channelestimation1}Y. Jin, J. Zhang, X. Zhang, H. Xiao, B. Ai and D. W. K. Ng, ``Channel Estimation for Semi-Passive Reconfigurable Intelligent Surfaces With Enhanced Deep Residual Networks," IEEE Transactions on Vehicular Technology, vol. 70, no. 10, pp. 11083-11088, Oct. 2021.
\end{thebibliography}

\end{document}